\newtheorem{proposition}{Proposition}
\begin{document}
\begin{frontmatter}
\title{Pricing Filtering in Dantzig-Wolfe Decomposition}
\author[1]{Abdellah Bulaich Mehamdi \corref{myfootnote}}
\cortext[myfootnote]{Corresponding author: Abdellah Bulaich Mehamdi. \\
Palaiseau, France, 91120.\\
E-mail address: \href{mailto:abdellah.bulaich-mehamdi@polytechnique.edu}{abdellah.bulaich-mehamdi@polytechnique.edu}. \\
Currently at EDF Lab Paris-Saclay.}
\author[2]{Mathieu Lacroix}
\author[1]{Sébastien Martin}

\affiliation[1]{organization={Huawei Technologies Ltd., Paris Research Center},
city={Boulogne-Billancourt},
country={France}}

\affiliation[2]{organization={Laboratoire d'Informatique de Paris-Nord (LIPN), Université Sorbonne Paris Nord - CNRS},
city={Villetaneuse},
country={France}}

\begin{abstract}
Column generation is used alongside Dantzig-Wolfe Decomposition, especially for linear programs having a decomposable pricing step requiring to solve numerous independent pricing subproblems.
We propose a filtering method to detect 
which pricing subproblems may have improving columns, and only those subproblems are solved during pricing. 
This filtering is done by providing light, computable bounds using dual information from previous iterations of the column generation. The experiments show a significant impact on different combinatorial optimization problems.
\end{abstract}

\begin{keyword}
Dantzig-Wolfe Decomposition, Linear Programming, Column Generation, Pricing Subproblems, Pricing Filtering. 
\end{keyword}
\end{frontmatter}

\section{Introduction}
\emph{Dantzig-Wolfe decomposition (DWD)} \cite{DWDpaper} is a well-known algorithm to solve linear optmization problems of the following form:
\begin{align*}
    &\min &&\sum_{k \in K} c^k \cdot x^k  \\
    &\text{s.t. } &&\sum_{k \in K} A^k \, x^k \geq b \\
    & &&x^k \in X^k &\forall k \in K      \\
    & &&x \geq 0 &\forall k \in K 
    \end{align*}
where $X^k$ is a polytope for each $k \in K$. This method consists in replacing each variable $x^k$ by a convex combination of the extreme points $\mathcal{X}^k = \{\chi_p^k: p \in P^k\}$ of $X^k$ for each $k \in K$, where a variable $\lambda_p^k$ is associated with each $\chi_p^k$, $p \in P^k$, leading to the following equivalent linear optimization problem. 
\begin{align}
&\min &&\sum_{k \in K} \sum_{p \in P^k} (c^k \cdot \chi_p^k ) \,  \lambda_p^k \label{dwd:obj} \\
&\text{s.t. } &&\sum_{k \in K} \sum_{p \in P^k} (A^k \, \chi_p^k) \, \lambda_p^k \geq b     \label{dwd:linking} \\
& &&\sum_{p \in P^k} \lambda_p^k = 1 &\forall k \in K     \label{dwd:convexity} \\
& &&\lambda_p^k \geq 0 &\forall k \in K, \forall p \in P^k \label{dwd:nonnegativity}
\end{align}

Since Formulation \eqref{dwd:obj}-\eqref{dwd:nonnegativity} usually contains an exponential number of variables, DWD consists in solving it with a \emph{delayed column generation (CG)}. A \emph{restricted master problem (RMP)}, corresponding to 
\eqref{dwd:obj}-\eqref{dwd:nonnegativity} but restricted to a subset of variables, is solved. Then, a \emph{pricing step} occurs which can be decomposed into $|K|$ pricing subproblems. The \emph{pricing subproblem} associated with $k \in K$ corresponds to finding a column of $\mathcal{X}^k$ with a negative reduced cost or to state that such a column does not exist in $\mathcal{X}^k$. If no subproblem returns a negative reduced cost column, the solution of RMP is optimal for \eqref{dwd:obj}-\eqref{dwd:nonnegativity}. Otherwise, all columns with a negative reduced cost returned by the subproblems are added to RMP and the algorithm iterates.

\medskip

In DWD, usually a lot of pricing subproblems are solved but return no negative reduced cost column. In this paper, we present a \emph{pricing subproblem filtering step} that is performed in each iteration of DWD to determine which pricing subproblems to solve in this iteration. This filtering step is done by computing for each $k \in K$ a lower bound on the minimum reduced cost among those of $\mathcal{X}^k$, and solving only the associated pricing subproblem when this lower bound is negative, as otherwise, there is no negative reduced cost column in $\mathcal{X}^k$. By decreasing the overall number of pricing subproblems solved during DWD, it improves its performance.

\medskip

In Section~\ref{sec:literature}, we present a literature overview related to our work. In Section~\ref{sec:filtering}, we present how to compute lower bounds on the minimum reduced cost. We also explain how to use it to filter pricing subproblems. We also propose a heuristic method to improve this lower bound. This allows to filter more pricing subproblems but sometimes filters ones for which there may exist a negative reduced cost column. 
In Section~\ref{sec:evaluation}, we evaluate the impact of different filtering strategies in DWD when solving the linear relaxation of two problems: the generalized assignment problem and the multi-commodity flow problem with side constraints.

\section{Related work}\label{sec:literature}
 
We are not aware of any other method that filters pricing subproblems in DWD based on a lower bound of the best-reduced cost of a pricing subproblem. However, reducing the running time of DWD or CG is an active area of research for decades.

To reduce the time at each iteration of DWD, one can stop the resolution of the pricing subproblems when enough negative reduced cost columns are found. For instance, in~\cite{Gamache}, the Aircrew rostering problem is solved using a solution method based on CG. The pricing problem consists in solving for each employee a shortest path problem with several resource constraints. Since up to 95\% of the CG time is spent in solving the pricing subproblems, they stop the pricing at each iteration of CG when a fixed number of negative reduced cost columns has been found. 

Another way to speed up the pricing is to run heuristics to find negative reduced cost columns and solve the pricing in an exact way only if the heuristics fail to find ones, see for instance~\cite{mehrotra_column_1996, shen_enhancing_2022} for greedy heuristics for the pricing problem when solving the graph coloring problem using a branch-and-price algorithm.

It is also possible to relax some constraints of the pricing problem to simplify its resolution. However, this generates non feasible columns which may give infeasible solutions and degrade the quality of the DWD bound. This has been done successfully for different vehicle routing problems. In such problems, the pricing reduces to computing resource-constrained elementary shortest path problems which are NP-hard. Removing (at least partially) the elementary requirement accelerates the pricing resolution but degrades the quality of the linear relaxation~\cite{Irnich2006TheSP}. This degradation is usually limited by adding strengthening inequalities~\cite{fukasawa_robust_2006}. 

Finally, CG generally suffers from the dual variables' oscillations between successive iterations due to the degeneracy of the linear problem. Numerous stabilization methods have been proposed over the years to tackle and address this issue: box step method~\cite{marsten_b_1975}, linear penalty function~\cite{DUMERLE1999229}, bundle method \cite{lemarechalNonsmooth1978,kiwielAggregate1983}, smoothing and in-out separation~\cite{pessoa_automation_2018} and dual optimal inequalities \cite{ben_amor_dual-optimal_2006,valerio_de_carvalho_using_2005,gschwind_dual_2016} 
to name a few. These stabilization methods tend to decrease the number of iterations of CG or DWD, improving the overall running time.

\medskip

Note that our filtering approach can be used in combination with the different mentioned methods. For instance, one can stop the pricing phase when enough negative reduced cost columns are found, the search of these columns being done by solving pricing subproblems that are not filtered with our method. Moreover, for each of these subproblems, one can run heuristics to try to fastly find negative reduced cost columns. As our method is generic, it is possible to relax some constraints in the pricing subproblems to speed up their resolution. Finally, since our filter method uses dual variables, it can also be applied when a stabilization method is used on the restricted master problem.

\section{Pricing filtering}\label{sec:filtering}

\subsection{Exact filtering}

When it comes to subproblems, a basic observation is that solving a subproblem that has a nonnegative minimum reduced cost is irrelevant. Thus, having an idea about its minimum reduced cost sign before computing it can help us to decide whether or not to solve it. In this context, we investigate the column generation's iterative structure and seek to establish a computable light bound to predict the minimum reduced cost sign in advance. By doing so, we hope to reduce the number of solved subproblems without compromising the solution quality and without heavy computations.

\medskip

Let RMP$^t$ denote RMP at iteration $t \geq 1$. RMP$^t$ contains for $k \in K$ a subset of extreme points of $X^k$ given by $\{\chi_p^k: p \in P^{k,t}\}$. Let $(\pi^t, \mu^t)$ be a dual optimal solution of RMP$^t$ where $\pi^t$ and $\mu^t$ are associated with inequalities~\eqref{dwd:linking} and \eqref{dwd:convexity}, respectively. The reduced cost of an extreme point $\chi_p^k$ of $X^k$ at iteration $t$ is $(c^k - \pi^t A^k) \cdot \chi_p^k - \mu^t_k$. Let $\bar{c}^{k,t}$ denote the minimum reduced cost at iteration $t$ among those of $\mathcal{X}^k$. Let $Y^k$ be a polyhedron containing $X^k$. 
 The following proposition gives a lower bound of $\bar{c}^{k,t}$ from $\bar{c}^{k,\ell}$ computed at iteration $1 \leq \ell < t$. 

\begin{proposition}
\label{pricing:exact:prop}
For $k \in K$ and $1 \leq \ell < t$, we have:
\begin{align*}
\bar{c}^{k,t} & \geq \bar{c}^{k,\ell} +  \mu_k^\ell -  \mu_k^t + \min_{x \in Y^k} ( (\pi^\ell - \pi^{t}) A^k) \cdot x 
\end{align*}
\end{proposition}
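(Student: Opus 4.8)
The plan is to rewrite each quantity $\bar c^{k,t}$ as the minimum of an affine function over the \emph{polytope} $X^k$ itself (not merely over its extreme points $\mathcal{X}^k$), to split the iteration-$t$ objective into the iteration-$\ell$ objective plus a correction term, and then to invoke the elementary fact that the minimum of a sum is at least the sum of the minima.

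First I would note that, since $\mathcal{X}^k$ is exactly the set of extreme points of the polytope $X^k$ and the reduced cost $x \mapsto (c^k - \pi^t A^k)\cdot x - \mu_k^t$ is affine, a linear objective attains its minimum over $X^k$ at an extreme point, so restricting to $\mathcal{X}^k$ loses nothing:
\begin{align*}
\bar c^{k,t} = \min_{x \in X^k} (c^k - \pi^t A^k)\cdot x - \mu_k^t,
\end{align*}
and similarly $\bar c^{k,\ell} = \min_{x \in X^k} (c^k - \pi^\ell A^k)\cdot x - \mu_k^\ell$. Then, for every $x$ I would write $(c^k - \pi^t A^k)\cdot x = (c^k - \pi^\ell A^k)\cdot x + ((\pi^\ell - \pi^t) A^k)\cdot x$, so that
\begin{align*}
\bar c^{k,t} + \mu_k^t = \min_{x\in X^k}\Big[(c^k - \pi^\ell A^k)\cdot x + ((\pi^\ell - \pi^t) A^k)\cdot x\Big] \ge \bar c^{k,\ell} + \mu_k^\ell + \min_{x\in X^k} ((\pi^\ell - \pi^t) A^k)\cdot x,
\end{align*}
using superadditivity of the minimum and identifying the first minimum with $\bar c^{k,\ell} + \mu_k^\ell$. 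Finally, because $X^k \subseteq Y^k$, enlarging the feasible set can only lower the minimum, so $\min_{x\in X^k}((\pi^\ell - \pi^t)A^k)\cdot x \ge \min_{x\in Y^k}((\pi^\ell - \pi^t)A^k)\cdot x$; substituting this and moving $\mu_k^t$ to the right-hand side gives the claimed inequality.

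There is essentially no obstacle here: the two facts used, namely that a linear program over a polytope is optimized at a vertex (so the switch between $\mathcal{X}^k$ and $X^k$ is harmless) and that $\min(f+g)\ge \min f+\min g$, are both standard. The only genuine modeling choice is $Y^k$: any polyhedron containing $X^k$ works, the trade-off being that a looser relaxation $Y^k$ yields a weaker bound but a cheaper auxiliary minimization $\min_{x\in Y^k}((\pi^\ell-\pi^t)A^k)\cdot x$, which is the quantity one actually wants to be light to compute.
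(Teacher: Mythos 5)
Your proof is correct and follows essentially the same route as the paper's: both rewrite $\bar{c}^{k,t}$ as a minimum over $X^k$, compare it to the iteration-$\ell$ minimum via the superadditivity of $\min$ (the paper phrases this as adding and subtracting $\bar{c}^{k,\ell}$ and applying the ``property of the $\min$ function''), and then relax $X^k$ to $Y^k$. Your explicit remark that the minimum over the extreme points $\mathcal{X}^k$ equals the minimum over the polytope $X^k$ is left implicit in the paper but is the same argument.
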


\begin{proof}
By definition of $\bar{c}^{k,t}$, the property of the $\min$ function and since $X^k$ is contained in $Y^k$, we have:
\begin{align*}
\bar{c}^{k,t} &= \min_{x \in X^k}  (c^k - \pi^{t} A^k) \cdot x - \mu_k^t\\
& = \min_{x \in X^k}  (c^k - \pi^{t} A^k) \cdot x - \mu_k^t + \bar{c}^{k,\ell} - \min_{x \in X^k}  (c^k - \pi^{\ell} A^k) \cdot x + \mu_k^\ell\\
& \geq \bar{c}^{k,\ell} +  \mu_k^\ell -  \mu_k^t + \min_{x \in X^k} ( (\pi^\ell - \pi^{t}) A^k) \cdot x\\
& \geq \bar{c}^{k,\ell} +  \mu_k^\ell -  \mu_k^t + \min_{x \in Y^k} ( (\pi^\ell - \pi^{t}) A^k) \cdot x
\end{align*}
\end{proof}

Proposition~\ref{pricing:exact:prop} provides a lower bound on the best reduced cost of each pricing subproblem using the information obtained when solving it to optimality at a previous iteration.
Indeed, at current iteration $t$ of CG, for each $k \in K$ and each previous iteration $\ell < t$, if the value $\bar{c}^{k,\ell}$ has been computed and dual values have also been stored, $\bar{c}^{k,\ell} +  \mu_k^\ell -  \mu_k^t + \min_{x \in Y^k} ( (\pi^\ell - \pi^{t}) A^k) \cdot x$ is a lower bound on $\bar{c}^{k,t}$. If this lower bound is nonnegative, this implies that $\mathcal{X}^k$ has no negative reduced cost column so it is useless to solve the pricing subproblem associated with $k$.
We propose to enhance DWD by filtering pricing subproblems, that is, not solving pricing subproblems for which we are sure there is no negative reduced cost column thanks to the lower bound on the best reduced cost provided in Proposition~\ref{pricing:exact:prop}. Note that to compute the lower bound on $\bar{c}^{k,t}$ given by Proposition~\ref{pricing:exact:prop}, one needs to optimize a linear function over $Y^k$, and this latter should be chosen so that this optimization is efficient while providing a tight bound. For instance, as in our experiments, $Y^k$ may be an hypercube containing $X^k$.

Algorithm~\ref{alg:dwd_pricing_filtering} presents our modified DWD enhanced with filtering. At each iteration $t$, RMP$^t$ is solved to optimality (line 7), and the dual variable vector $\pi_t$ is stored in $\Pi$ (line 8). In this way, it is possible to retrieve all the previous computed dual values $\pi^{\ell}$ with $\ell < t$. The pricing step is decomposed into $|K|$ subproblems.  For each $k \in K$, a filtering step (lines 10-17) is performed to determine whether it is necessary to solve the pricing subproblem associated with $k$ (case $filter$ equals $False$) or not (case $filter$ equals $True$). This corresponds to computing lower bounds on the minimum reduced cost among those of  $\mathcal{X}^k$ (line 12) using Proposition~\ref{pricing:exact:prop} and previous exact resolutions of this pricing subproblems, and checking whether there exists one which is nonnegative (lines 13-16). In this case, no negative cost column exists in $\mathcal{X}^k$ and the pricing subproblem associated with $k$ is not solved. Otherwise, this latter is solved in an exact way (line 19) and returns a minimum reduced cost column $s^{k,t}$ with reduced cost $\bar{c}^{k,t}$. The triplet 
$(t,\bar{c}^{k,t}, \mu_t^k)$ is stored in $\Upsilon_k$ (line 20) in order to be used in the next iterations to compute lower bounds on the minimum reduced cost of $\mathcal{X}^k$. If the computed reduced cost is negative, the associated column is added to the restricted master problem (lines 21-24). DWD ends when no negative reduced cost columns are found during an iteration (case $optimal$ equals $True$).

\begin{algorithm}[t]
    \caption{DWD with pricing filtering}\label{alg:dwd_pricing_filtering}
    \begin{algorithmic}[1]
    \Require RMP$^0$
    \Ensure Optimal solution of RMP
    \State $t \gets 0$
    \State $\Pi \gets \emptyset$
    \State $\Upsilon_k \gets \emptyset$ for all $k \in K$
    \Repeat
        \State $optimal \gets True$
        \State $t \gets t + 1$
        \State $\mu^t,\pi^t$ = dual solution(RMP$^t$)
        \State push $\pi^t$ in $\Pi$
        \For{$k \in K$}
            \State $filter \gets False$
            \ForAll{$(\ell,\bar{c}^{k,\ell}, \mu_\ell^k) \in \Upsilon_k$}
                \State $LB \gets \bar{c}^{k,\ell} +  \mu_k^\ell -  \mu_k^t + \min_{x \in Y^k} ( (\pi^\ell - \pi^{t}) A^k) \cdot x$
                \If{$LB \ge 0$}
                \State $filter \gets True$
                \State \textbf{break}
                \EndIf
            \EndFor
            \If{$\neg filter$}
                \State $\bar{c}^{k,t}, s^{k,t} \gets$ pricing subproblem ($k, \mu_k^t, \pi^t$)
                \State push $(t,\bar{c}^{k,t}, \mu_t^k)$ in $\Upsilon_k$
                \If{$\bar{c}^{k,t} < 0$}
                    \State $optimal \gets False$
                    \State Add column $s^{k,t}$ to RMP$^t$
                \EndIf
            \EndIf
        \EndFor
    \Until{$optimal$}

    \end{algorithmic}
    \end{algorithm}
    
Note that if the pricing subproblem associated with some $k \in K$ is solved in a heuristic way, the reduced cost returned by the heuristic should not be stored in $\Upsilon_k$ as it cannot be used to find valid lower bounds in the next iterations, even if it is negative. Indeed, Proposition~\ref{pricing:exact:prop} requires that $\bar{c}^{k,\ell}$ is the minimum reduced cost of the pricing subproblem associated with $k$ at iteration $\ell$.

\medskip

DWD presented in Algorithm~\ref{alg:dwd_pricing_filtering} stores dual vectors $\pi^t$ at each iteration $t$. If the dimension of $\pi^t$ is usually small since many constraints have been shifted to the pricing subproblems, it can be memory consuming when the number of iterations of DWD increases. In this case, one can modify Algorithm~\ref{alg:dwd_pricing_filtering} to set up a dynamic memory management for storing these dual vectors, storing for instance only the last $\alpha$ vectors where $\alpha$ is a fixed value, or removing dual vectors for which only a small number of pricing subproblems have been solved to optimality since in this case, keeping such a dual vector will allow to compute only a small number of lower bounds at each iteration. In the same way, for each $k\in K$, it can be interesting to remove some values in $\Upsilon_k$ to decrease the number of computed lower bounds, especially if optimizing on $Y^k$ is not efficient.

\subsection{Heuristic filtering}\label{pricing:heuristic}

Proposition~\ref{pricing:exact:prop} yields a lower bound used to filter pricing subproblems, see Algorithm~\ref{alg:dwd_pricing_filtering}. To fasten DWD, we propose to compute a variant of the lower bound of Proposition~\ref{pricing:exact:prop} when $Y^k \subseteq \{x : x \geq \mathbf{0}\}$. Note that the new value is no more a valid lower bound of the best reduced cost of a pricing subproblem. Hence, using it for filtering destroys the exactness of DWD, that is, the returned solution is still primal feasible but may not be optimal since some columns may be missing in the last RMP. However, computational experiments show that there is only a small degradation of the quality of the solution, whereas it speeds up a lot the running time.

The lower bound of Proposition~\ref{pricing:exact:prop} depends on the value $\min_{x \in Y^k} ( (\pi^\ell - \pi^{t}) A^k) \cdot x$. If $Y^k$ is not tight with respect to $X^k$, the lower bound is not tight with respect to the best reduced cost among those of the extreme points of $X^k$. To bypass this issue, we consider a new objective function in this minimization problem.

At iteration $t$, for $k \in K$, let $I(k,t)$ be the indices of the constraints~\eqref{dwd:linking} containing at least one variable of $\{\lambda^k_p : p \in P^{k,t}\}$. Consider the vector $w$ defined by $w_i = \max\{\bigl((\pi^\ell - \pi^{t}) A^k\bigr)_i, z_i\}$ where $z_i$ equals $-\infty$ if $i \in I(k,t)$, and 0 otherwise. Whenever $Y^k \subseteq \{x : x \geq \mathbf{0}\}$, $\min_{x \in Y^k} w \cdot x \ge \min_{x \in Y^k} ( (\pi^\ell - \pi^{t}) A^k) \cdot x$. The heuristic filtering consists in modifying Algorithm~\ref{alg:dwd_pricing_filtering} by replacing line 12 by:
\begin{equation*}
    LB \gets \bar{c}^{k,\ell} +  \mu_k^\ell -  \mu_k^t + \min_{x \in Y^k} w \cdot x
\end{equation*}
Whenever $Y^k \subseteq \{x : x \geq \mathbf{0}\}$, this gives a higher lower bound (that may be not valid), and more pricing subproblems are filtered.

\section{Evaluation}\label{sec:evaluation}

We evaluate our approach on two problems: the multi-commodity flow problem with side constraints and the generalized assignment problem.
Both problems are usually tackled by considering a Dantzig-Wolfe decomposition and are solved using a branch-and-price. We study here the impact of filtering the pricing subproblems when solving the linear relaxation by DWD. 

\medskip

For computing the lower bound for filtering, we consider $Y^k$ as the binary hypercube for both problems. Indeed, preliminary computational results performed on tighter relaxations of $X^k$ for the multi-commodity flow problem with side constraints show slight improvements of the lower bound (with more effort to compute this bound) but not enough to filter more pricing subproblems. In particular, for $k \in K$, we tested when $Y^k$ is defined as the incidence vectors of the arc sets with an arc leaving $s_k$, no arc leaving $t_k$, and at most one arc leaving each other vertex. We also tested when $Y^k$ is the incidence vectors of sets of arcs with at most $\alpha_k$ arcs, where $\alpha_k$ is the maximum number of arcs such that the sum of their delay $d_a$ is no more than $d_k$.

Since the hypercube is contained in $\{x : x \geq \mathbf{0}\}$, the value computed for the heuristic filtering is greater than or equal to the lower bound of Proposition~\ref{pricing:exact:prop}. Hence, we also evaluate the filtering heuristics for both problems. 

In our experiments, we keep in memory dual values obtained at each iteration. However, we compare three strategies for filtering a subproblem associated with some $k \in K$:
\begin{itemize}
    \item {\small \textsc{all}}: all values stored in $\Upsilon_k$ are used to compute lower bounds, as described in Algorithm~\ref{alg:dwd_pricing_filtering},
    \item {\small \textsc{computed}}: only one lower bound is computed using the last value stored in  $\Upsilon_k$,
    \item {\small \textsc{add}}: only one lower bound is computed using the last value $(\ell, \bar{c}^{k,\ell},\mu_\ell^k)$ with $\bar{c}^{k,\ell} <0$ stored in  $\Upsilon_k$.
\end{itemize}

The first strategy consists in computing all possible bounds whereas the second (resp. third) strategy uses only the information corresponding to the last time the pricing subproblem associated with $k \in K$ was solved (resp. a negative cost column associated with $k$ was found). 
All strategies have been tried for exact and heuristic filterings. In the experimental results, the name of the strategy is preceded by {\small \textsc{exact-}} in case of exact filtering, and {\small \textsc{heur-}} otherwise.

\medskip 

Each strategy is compared with the baseline algorithm (referred hereafter as {\small \textsc{baseline}}) that corresponds to DWD without any filtering. For the baseline algorithm, the number of solved pricing subproblems (\texttt{\#Calls}), the number of columns in the last RMP (\texttt{\#Vars}) and the running time in seconds (\texttt{time (s)}) of DWD is reported. For each strategy, we report in the tables:
\begin{itemize}
\setlength\itemsep{-0.1em}
\item[--] \texttt{\%rCalls} is the percentage reduction of the number of solved pricing subproblems with respect to the baseline. 
It is equal to $100 \times \frac{\texttt{\#Calls} - n}{\texttt{\#Calls}}$ for a filtering strategy that solves $n$ pricing subproblems in a DWD for an instance whereas baseline solves \texttt{\#Calls} ones.
\item[--] \texttt{\%rTime} is the percentage reduction of the running time with respect to the baseline. It is equal to $100 \times \frac{\texttt{time (s)} - t}{\texttt{time (s)}}$ for a filtering strategy that solves DWD in $t$ seconds for an instance whereas baseline solves it in \texttt{time (s)} seconds.

\end{itemize}
For heuristic filtering methods, we also provide the gap (column \texttt{GAP}) to measure the quality of the returned solution. It is equal to $100\times \frac{v - v^*}{v^*}$ where $v$ is the cost of the returned solution and $v^*$ the cost of the solution obtained with no filtering method.

\medskip

Experiments were conducted on an Intel(R) Xeon(R) CPU E5-4627 v2 of 3.30GHz with 504GB RAM, running under Linux 64 bits. A vanilla CG scheme was developed in C++ without any framework, and CPLEX 12.6.3 was used to solve RMP at each iteration of CG and the pricing subproblems for the generalized assignment problem. A precision (\textit{e.g.}, pricing tolerance) of $10^{-4}$ is used when checking whether a column has a negative reduced cost to escape the numerical inaccuracies complexations~\citep{colgen}.

\medskip

The rest of this section is divided into two parts: the first part is dedicated to the multi-commodity flow problem with side constraints whereas the second one is related to the generalized assignment problem. In each part, we describe the problem and DWD for solving the linear relaxation and the experimental results we obtain. 

\subsection{Multi-Commodity Flow Problem with Side Constraints}

The \emph{Multi-Commodity Flow Problem with Side Constraints (MC)} consists in routing a set of commodities in a capacitated network from their source to their target while minimizing the cost and respecting the capacity and the delay constraints \cite{Holmberg}.

An instance of MC is defined as follows. Given a directed graph $G = (V, A)$ where $V$ is the set of vertices and $A$ is the set of arcs, we associate with each arc $a \in A$  a capacity $c_a$, a delay $d_a$ ($\geq 0$) and a routing cost $r_a$ ($\geq 0$). For each vertex $v \in V$, we define $\delta^+(v)$ as the set of outgoing arcs and $\delta^-(v)$ as the set of ingoing arcs. Furthermore, given a set of commodities $K$, for each commodity $k$, we associate a source vertex $s_k \in V$, a target vertex $t_k \in V$, a demand bandwidth $b_k$, a maximal delay $d_k$.

\paragraph{DWD} When not considering delay requirements, the multiflow problem can be formulated with a compact integer linear formulation where variables are the quantity of commodity $k \in K$ that is routed on arc $a \in A$. This formulation is known in the literature as the arc-flow formulation \cite{Ahuja1993}. However, even if this formulation is compact, solving its linear relaxation tends to be challenging for large instances due to the number of variables and constraints. It is preferable to apply DWD to solve it. This gives the well-known arc-path linear relaxation. Taking into account delay requirements can be done by integrating it in the pricing, that is, by considering only paths satisfying the delay requirement. More formally, for each commodity $k \in K$, let $P^k$ be the set of $s_kt_k$-paths $p$ of $G$ respecting the delay requirement $\sum_{a \in p} d_a \le d_k$. Let $\mathcal{X}^k = \{\chi^p:p \in P^k\}$ be the set of incidence vectors of paths of $P^k$. 

Let $\lambda_p^k$ denote the fraction of commodity $k$ that is routed along path $p$, and let $r_p = \sum_{a \in p} r_a$ denote the routing cost of path $p$. The linear relaxation of MC is as follows~\cite{Holmberg}:  
\begin{align}
&\min &&\sum_{k \in K} \sum_{p \in P^k} b_k \sum_{a \in p} r_a \, \lambda_p^k \label{cmcfp:obj}\\
&\text{s.t. }&&-\sum_{k \in K} b_k \sum_{p\in P^k \mid \, a \in p}  \lambda_p^k \ge -c_a  & \forall a \in A \label{cmcfp:capacity}\\
& &&\sum_{p\in P^k} \lambda_p^k \ge 1  &  \forall k \in K \label{cmcfp:convexity} \\
& &&\lambda_p^k \ge 0  & \forall k \in K, \forall p \in P^k \label{cmcfp:nonnegativity}
\end{align}

The objective function~\eqref{cmcfp:obj} minimizes the total routing cost. Inequalities \eqref{cmcfp:capacity} are the capacity constraints that ensure that the total flow over an arc does not exceed its capacity. The convexity constraints \eqref{cmcfp:convexity} impose that each demand is routed. Note that in DWD formulation, constraints~\eqref{cmcfp:convexity} are equalities, but this can be relaxed without changing the optimality of the linear relaxation. Constraints \eqref{cmcfp:nonnegativity} are the nonnegativity constraints.

Formulation~\eqref{cmcfp:obj}-\eqref{cmcfp:nonnegativity} contains an exponential number of nonnegative variables. The columns are then dynamically generated, and the pricing problem is decomposed into a subproblem for each commodity. The pricing subproblem associated with the commodity $k \in K$ consists in determining a path $p$ of $P^k$ that minimizes the reduced cost $\sum_{a \in p} (r_a + b_k \pi_a) - \mu_k$. This reduces to computing a resource-constrained shortest $s_kt_k$-path problem with respect to arc costs $w_a = r_a + b_k \pi_a$~\cite{CSPP}. In our experiments, this problem is solved using the algorithm described in~\cite{LOZANO2013378}.

\paragraph{Filters}  

Using the binary hypercube for $Y^k$, at iteration $t$, the lower bound corresponding to the exact filtering on $\bar{c}^{k,t}$ using the information of iteration $\ell$  is given by:
\begin{equation}\label{eq:exact_filter_mc}
    \bar{c}^{k,\ell} + \mu_k^\ell - \mu_k^t + \sum_{a \in A} \min\{0, (\pi_a^\ell - \pi_a^t)\}
\end{equation}
Such a bound can be computed in $O(|A|)$.
The heuristic filter consists in restricting the sum in equation~\eqref{eq:exact_filter_mc} to the set of arcs contained in a path of $P^k$ at iteration~$t$.

\paragraph{Dataset}

In our experiments, we used SNDlib instances~\cite{OrlowskiPioroTomaszewskiWessaely2010}, and realistic telecommunication instances detailed in \cite{HUIN202372}.
In the tables reporting the results we obtained on these instances, the number $|V|$ of nodes, $|A|$ of arcs and $|K|$ of commodities  is given in parentheses after the name of the instance.

\paragraph{Experimental results}

\medskip

An experimental comparison of the different filtering strategies for MC is presented in Tables \ref{experiments:table:MCF} and~\ref{experiments:table:MCF:heuristic}. 

Table \ref{experiments:table:MCF} shows the results for exact filtering methods. 
First, let us remark that for SNDlib instances, the computational time is small, \textit{i.e.}, less than two seconds for each instance. Consequently, the gain in terms of time can fluctuate between several runs. Moreover, for these instances, the time spent on the pricing subproblems is small. Therefore, the extra burden due to the copies of the dual variables in memory and the computation of the lower bounds may not compensate for the time saved thanks to the filtered pricing subproblems. However, on Nn instances, as the computational time is longer, filtering speeds up a lot DWD. For a few instances, it allows us to filter around half of the pricing subproblems, dividing more or less by two the running time of DWD. 
We remark that the best gain in computational time is not always \textsc{exact-all}. Indeed, \textsc{exact-add} provides better computational time for some Nn instances. This is because, even if \textsc{exact-add} filters less subproblems, it needs less time at each iteration to check whether or not a pricing subproblem is filtered, as it computes only one lower bound per subproblem. Finally, on some instances (dfn-bwin, india35, newyork, norway and zib54), all pricing subproblems with no negative reduced cost columns are filtered by every strategy.

Table \ref{experiments:table:MCF:heuristic} shows the performance when heuristic pricing filtering methods are used.
The \textsc{heur-all} method provides the best performance in terms of the number of filtered pricing subproblems. For instance, \textsc{heur-all} filters 21\%  more subproblems than \textsc{exact-all}.
 
For SNDlib instances where the computational time is low, the computational time is reduced on average by 15\%. For Nn instances, where the computational time is higher, the best method is \textsc{heur-computed} where the computational time is reduced on average by 60\%. Even if \textsc{heur-computed} slightly filters less pricing subproblems than \textsc{heur-all}, it has a better computational time on average. 
Strategies \textsc{heur-all} and \textsc{heur-computed} decrease the quality of the linear relaxation by 0.28\% on average, while \textsc{heur-add} always obtains the best solution in the tested instances. Remark that \textsc{heur-add} gets a better performance in comparison to the exact filtering methods. This method is a good trade-off between improvement and quality of the linear relaxation.

To sum up, for MC, the proposed methods provide an important speed-up. The exact filtering methods gain an average of 21\% of time.
Heuristic filtering methods also show good performances: the best strategy decreases the computational time up to 70\% and about 29\% on average, and it provides near-optimal solutions with an average gap of 1.2\%.

\subsection{Generalized assignment problem}

The \emph{Generalized Assignment Problem (GA)} consists in assigning a set of items to a set of bins while minimizing the assignment cost and without exceeding the bin capacities~\cite{gapDWD}.
An instance of GA is as follows: given $m$ items and a set $K = \{ 1 \dots |K|\}$ of bins, we associate with each bin $k$, a capacity $W_k$ ($\geq 0$), and for each item $i$ contained in bin $k$ a cost $c_{i,k}$ ($\geq 0$) and a weight $w_{i,k}$ ($\geq 0$). 

\paragraph{DWD} There exists a compact formulation for GA but its linear relaxation is usually tackled using DWD, resulting in branch-and-price algorithms~\cite{Savelsbergh97branch}.

For each bin $k \in K$, let $P^k$ be the set of feasible assignments for bin $k$, that is, the set of item sets whose weight sum does not exceed the bin capacity. Denote by $\mathcal{X}^k$ the set of incidence vectors of feasible assignments of bin $k$. By definition,  $\mathcal{X}^k = \{x \in \{0,1\}^m : \sum_{i=1}^m w_{i,k} x_i \le W_k\}$. The linear relaxation of GA can be formulated as follows:
\begin{align}
\label{form:GA:obj}&\min &&\sum_{k \in K} \sum_{p \in P^k} (\sum_{i \in p} c_{i,k}) \lambda_p^k & \\
\label{form:GA:assignOneItem} &\text{s.t. }&&\sum_{k \in K} \sum_{p \in P^k | i \in p} \lambda_p^k  \geq 1 &\forall i=1,\dots,m \\
\label{form:GA:atMostOneAssignment} & &&-\sum_{p \in P^k} \lambda_p^k \ge -1 &\forall k \in K \\
\label{form:GA:nonnegativity} & && \lambda_p^k \geq 0 &\forall k \in K, \forall p \in P^k
\end{align}

The objective function \eqref{form:GA:obj} minimizes the assignment cost. Inequalities~\eqref{form:GA:assignOneItem} ensure that each item belongs to at least one selected feasible assignment of a bin. Inequalities~\eqref{form:GA:atMostOneAssignment} and~\eqref{form:GA:nonnegativity} force a bin to be used at most once. Note that GA is usually presented as a maximization problem but it can be easily translated in the minimization form we present, see~\cite{martello_generalized_1992} for details.

Formulation \eqref{form:GA:obj}-\eqref{form:GA:nonnegativity} contains an exponential number of variables, so columns are dynamically generated. The pricing problem can be decomposed into one pricing subproblem for each bin $k \in K$. Denoting by $\pi$ and $\mu$ the dual variables associated with inequalities~\eqref{form:GA:assignOneItem} and~\eqref{form:GA:atMostOneAssignment} respectively, the pricing subproblem associated with bin $k \in K$ reduces to find whether there exists a feasible assignment $p \subseteq \{1,\dots,m\}$ for bin $k$  such that $\sum_{i \in p} (c_{i,k} - \pi_i) + \mu_k$ is negative. This reduces to the computation of a binary knapsack problem with $m$ items, each having a cost equal to $(c_{i,k} - \pi_i)$ and a weight $w_{i,k}$, the knapsack having a capacity of $W_k$. In our experiments, we formulate it as a binary linear problem and solve it using CPLEX.

\paragraph{Filters}

Using the binary hypercube for $Y^k$, at iteration $t$, the lower bound corresponding to the exact filtering on $\bar{c}^{k,t}$ using the information of iteration $\ell$  is given by:
\begin{equation}\label{eq:exact_filter_ga}
    \bar{c}^{k,\ell} - \mu_k^\ell + \mu_k^t + \sum_{i = 1}^m \min\{0, (\pi_i^t - \pi_i^\ell)\}
\end{equation}
Such a bound can be computed in $O(m)$.
The heuristic filter consists in restricting the sum in equation~\eqref{eq:exact_filter_ga} to the set of items contained in a feasible assignment of $P^k$ at iteration~$t$.

\paragraph{Dataset}
Pricing filtering in DWD is interesting when the number of solved pricing subproblems with no negative reduced cost column is large. For GA, when the number of objects is bigger than the number of bins, this number is small. Hence, our method is ineffective for GA instances of the OR-Library \cite{orlib} so we generate 9 sets of random instances, denoted E1$,\dots, $ E9,  where the number of bins is greater than the number of objects.
Each set corresponds to 100 randomly generated instances containing the same number of bins (100, 1000 or 5000) and objects (10, 50, or 100). In the tables reporting the results we obtained for these sets, the number $|K|$ of bins and $m$ of objects are given in parentheses after the name of the set.

For a given number of bins and objects, each of the 100 instances is generated as follows.
The costs and weights are randomly chosen in the intervals $\llbracket 1,100 \rrbracket$ and $\llbracket 5, 20 \rrbracket$, respectively. To ensure the existence of a solution, bin capacities are defined as follows. An assignement of the objects to the bins is randomly chosen, and the capacity of each bin is set to the sum of the weights of its assigned objects plus one, if there are any, and to a random value within the interval $\llbracket 5,100 \rrbracket$ otherwise.

\paragraph{Experimental results}

Tables \ref{experiments:table:GA} and \ref{experiments:table:GA:heuristic} report the average results obtained for GA with exact and heuristic filtering, respectively. 

We remark that the percentage of filtered pricing subproblems is smaller in comparison with MC. Whereas \textsc{exact-add} does not filter any pricing subproblem in any instance, the other two exact strategies filter some pricing subproblems for five sets of instances over the nine tested sets, and the percentage of filtered subproblems by both strategies are similar. The increase in the computational time for the instances without a filtered subproblem is small for the \textsc{exact-computed} method, and a little bit more for the \textsc{exact-all} method. On average, the \textsc{exact-all} (resp. \textsc{exact-computed}) method filters 6.413\% (resp. 6.406\%) of the pricing subproblems and decreases by 14\% (Resp. 11.16\%) the computational time. 

Regarding the heuristic filtering strategies, the best one is \textsc{heur-computed} which filters 14.23\% of the pricing subproblems, resulting in a decrease of 17.20\% of the computational time, but always finds an optimal solution in every tested instance. The \textsc{heur-all} method provides a better improvement from a computational time point of view with a decrease of 26.76\% on average, but results in non-optimal solutions with a gap up to 23.89\% for an instance set and about 3. 75\% on average. 
Note that for dataset E6, \textsc{heur-add} increases the number of solved pricing subproblems because the number of iterations in DWD increases with respect to the baseline.

\begin{table*}[!ht]
    \centering
    \scalebox{0.9}{
\begin{tabular}{|l|r|r|r|r|r|r|r|r|r|} \hline
\multirow{2}{*}{Instance ($|V|$, $|A|$, $|K|$)} &  \multicolumn{3}{|c|}{{\small \textsc{baseline}}} &  \multicolumn{2}{|c|}{{\small \textsc{exact-add}}} &  \multicolumn{2}{|c|}{{\small \textsc{exact-all}}} &  \multicolumn{2}{|c|}{{\small \textsc{exact-computed}}}  \\ \cline{2-10}
 & \#Calls & \#Added & time (s) & \texttt{\%rCalls}  & \texttt{\%rTime} & \texttt{\%rCalls} & \texttt{\%rTime} & \texttt{\%rCalls} & \texttt{\%rTime}\\
\hline
        dfn-bwin (10, 45, 90)& 180 & 90 & 0.024 & \textbf{50.00\%} & \textbf{16.67\%} & \textbf{50.00\%} & \textbf{16.67\%} & \textbf{50.00\%} & 8.33\% \\ \hline
        nobel-us (14, 21, 91) & 273 & 100 & 0.029 & 0.00\% & -24.14\% & \textbf{1.47\%} & \textbf{-10.34\%} & \textbf{1.47\%} & \textbf{-10.34\%} \\ \hline
        polska (12, 18, 66) & 198 & 70 & 0.031 & 0.00\% & \textbf{19.35\%} & \textbf{4.04\%} & -6.45\% & \textbf{4.04\%} & 9.68\% \\ \hline
        pdh (11, 34, 24) & 120 & 47 & 0.038 & 0.00\% & \textbf{21.05\%} & \textbf{10.83\%} & -10.53\% & \textbf{10.83\%} & 0.00\% \\ \hline
        newyork (16, 46, 240) & 480 & 240 & 0.042 & \textbf{50.00\%} & 14.29\% & \textbf{50.00\%} &\textbf{ 23.81\% }& \textbf{50.00\%} & 9.52\% \\ \hline
        sun (27, 102, 67) & 335 & 153 & 0.053 & 15.52\% & \textbf{9.43\%} & \textbf{16.72\% }& 3.77\% & 14.63\% & 7.55\% \\ \hline
        di-yuan (11, 42, 22) & 176 & 62 & 0.065 & 7.39\% & \textbf{9.23\%} & \textbf{7.95\%} & -3.08\% & \textbf{7.95\%} & 7.69\% \\ \hline
        nobel-germany (17, 26, 121) & 847 & 206 & 0.082 & 6.85\% & 9.76\% & \textbf{16.17\%} & \textbf{26.83\%} & 14.88\% & 24.39\% \\ \hline
        norway (27, 102, 67) & 1404 & 702 & 0.095 & \textbf{50.00\%} & \textbf{35.79\%} & \textbf{50.00\%} & 28.42\% & \textbf{50.00\%} & 30.53\% \\ \hline
        ta1 (24, 55, 396) & 1584 & 453 & 0.099 & 61.68\% & 31.31\% & \textbf{64.33\%} & \textbf{41.41\%} & \textbf{64.33\% }& 28.28\% \\ \hline
        india35 (35, 80, 595) & 1190 & 595 & 0.110 &\textbf{ 50.00\% }& \textbf{39.09\%} & \textbf{50.00\%} & 32.73\% & \textbf{50.00\%} & 36.36\% \\ \hline
        france(25, 45, 300) & 2100 & 661 & 0.184 & 8.95\% & 3.26\% & \textbf{13.05\%} & \textbf{13.04\%} & 12.33\% & 9.78\% \\ \hline
        geant (22, 36, 462) & 3234 & 810 & 0.192 & 18.21\% & 2.60\% & \textbf{31.48\%} & \textbf{4.17\%} & 29.72\% & 0.52\% \\ \hline
        zib54 (54, 81, 1501) & 3002 & 1501 & 0.241 & \textbf{50.00\% }& 43.57\% & \textbf{50.00\%} & \textbf{50.21\%} & \textbf{50.00\%} & 36.93\% \\ \hline
        nobel-eu (28, 41, 378) & 2646 & 683 & 0.272 & 1.85\% & 4.78\% & \textbf{16.14\% }& \textbf{14.71\%} & 14.85\% & -0.37\% \\ \hline
        ta2 (65, 108, 1869) & 11214 & 3331 & 1.098 & 36.99\% & 23.95\% & \textbf{46.70\% }& 29.69\% & \textbf{46.70\%} & \textbf{29.78\%} \\ \hline
        cost266 (37, 57, 1332) & 9324 & 2555 & 1.613 & 3.43\% & -0.37\% & \textbf{11.65\%} & 3.10\% & 11.07\% &\textbf{ 4.03\%} \\ \hline
        N600 (600, 2400, 1000) & 7000 & 1546 & 5.661 & 42.40\% &\textbf{ 40.66\%} & \textbf{45.00\%} & 40.43\% & 40.93\% & 39.30\% \\ \hline
        N800 (800, 3200, 2000) & 14000 & 3438 & 15.036 & 35.09\% & \textbf{32.73\%} & \textbf{37.99\%} & 30.91\% & 34.00\% & 31.78\% \\ \hline
        N1000 (1000, 4000, 4000) & 36000 & 7442 & 48.988 & 28.78\% &\textbf{ 26.57\%} & \textbf{31.35\%} & 22.00\% & 27.67\% & 25.52\% \\ \hline
        N1400 (1400, 5600, 10000) & 60000 & 13552 & 111.846 & 53.19\% & 51.14\% & \textbf{55.32\% }& \textbf{51.47\%} & 52.15\% & 50.23\% \\ \hline
        N1200 (1200, 4800, 8000) & 72000 & 15783 & 121.757 & 24.81\% & \textbf{21.21\%} & \textbf{27.43\%} & 16.71\% & 23.86\% & 20.56\% \\ \hline
        N1600 (1600, 6400, 12000) & 72000 & 16189 & 155.376 & 53.08\% & 51.32\% & \textbf{55.03\%} & \textbf{51.58\%} & 52.03\% & 50.36\% \\ \hline
        N1800 (1800, 7200, 15999) & 127992 & 26772 & 327.129 & 35.34\% & 30.41\% & \textbf{37.51\%} & \textbf{31.91\%} & 33.71\% & 28.90\% \\ \hline
        N2000 (2000, 8000, 16000) & 144000 & 25757 & 407.631 & 40.00\% & 35.83\% & \textbf{42.87\%} & \textbf{37.28\% }& 38.36\% & 34.56\% \\ \hline
    \end{tabular}
}
    \caption{Experimental results on MC with different exact filtering strategies}
    \label{experiments:table:MCF}
\end{table*}

\begin{table*}[!ht]
\centering
\scalebox{0.75}{%
\begin{tabular}{|l|r|r|r|r|r|r|r|r|r|r|r|r|} \hline

\multirow{2}{*}{Instance ($|V|$, $|A|$, $|K|$)} & \multicolumn{3}{|c|}{{\small \textsc{baseline}}} &  \multicolumn{3}{|c|}{{\small \textsc{heur-add}}} &  \multicolumn{3}{|c|}{{\small \textsc{heur-all}}} &  \multicolumn{3}{|c|}{{\small \textsc{heur-computed}}}  \\ \cline{2-13}

& \texttt{\#Calls} & \texttt{time (s)} & \texttt{cost} & \texttt{\%rCalls}  & \texttt{\%rTime} & \texttt{GAP} &\texttt{\%rCalls} & \texttt{\%rTime} & \texttt{GAP} & \texttt{\%rCalls} & \texttt{\%rTime} & \texttt{GAP}\\
\hline
dfn-bwin (10, 45, 90) & 180 & 0.024 & 1.83E+07 & \textbf{50.00\%} & \textbf{16.67\%} & 0.00\% & \textbf{50.00\%} & 8.33\% & 0.00\% & \textbf{50.00\%} & 8.33\% & 0.00\% \\ \hline
        nobel-us (14, 21, 91) & 273 & 0.029 & 8.06E+08 & 0.00\% & \textbf{-6.90\%} & 0.00\% & \textbf{1.10\%} & -10.34\% & 0.00\% & \textbf{1.10\%} & -13.79\% & 0.00\% \\ \hline
        polska (12, 18, 66) & 198 & 0.031 & 7.59E+08 & 0.00\% & \textbf{12.90\%} & 0.00\% & \textbf{3.54\%} & 9.68\% & 0.00\% & \textbf{3.54\%} & \textbf{12.90\%} & 0.00\% \\ \hline
        pdh (11, 34, 24) & 120 & 0.038 & 2.67E+08 & 0.00\% & \textbf{5.26\%} & 0.00\% & 0.00\% & 0.00\% & 0.00\% & 0.00\% & -10.53\% & 0.00\% \\ \hline
        newyork (16, 46, 240) & 480 & 0.042 & 8.46E+04 & \textbf{50.00\%} & 16.67\% & 0.00\% & \textbf{50.00\%} & \textbf{30.95\%} & 0.00\% & \textbf{50.00\%} & 23.81\% & 0.00\% \\ \hline
        sun (27, 102, 67) & 335 & 0.053 & 2.71E+06 & 15.52\% & 3.77\% & 0.00\% & \textbf{28.06\%} & \textbf{18.87\%} & 0.00\% & 27.46\% & 1.89\% & 0.00\% \\ \hline
        di-yuan (11, 42, 22) & 176 & 0.065 & 1.40E+06 & 7.39\% & -4.62\% & 0.00\% & \textbf{19.89\%} & 3.08\% & 0.00\% & \textbf{19.89\%} & \textbf{6.15\%} & 0.00\% \\ \hline
        nobel-germany (17, 26, 121) & 847 & 0.082 & 6.96E+07 & 6.85\% & 13.41\% & 0.00\% & \textbf{25.62\%} & 15.85\% & 0.00\% & 25.50\% & \textbf{18.29\%} & 0.00\% \\ \hline
        norway (27, 102, 67) & 1404 & 0.095 & 7.05E+05 & \textbf{50.00\%} & \textbf{33.68\%} & 0.00\% & \textbf{50.00\%} & 29.47\% & 0.00\% & \textbf{50.00\%} & 32.63\% & 0.00\% \\ \hline
        ta1 (24, 55, 396) & 1584 & 0.099 & 1.39E+11 & 62.12\% & \textbf{43.43\%} & 0.00\% & \textbf{65.47\%} & 39.39\% & 0.00\% & \textbf{65.47\%} & 39.39\% & 0.00\% \\ \hline
        india35 (35, 80, 595) & 1190 & 0.110 & 3.51E+05 & \textbf{50.00\% }& 31.82\% & 0.00\% & \textbf{50.00\% }& 28.18\% & 0.00\% & \textbf{50.00\%} & \textbf{33.64\%} & 0.00\% \\ \hline
        france (25, 45, 300) & 2100 & 0.184 & 1.90E+10 & 9.00\% & 12.50\% & 0.00\% & \textbf{19.57\%} & \textbf{16.30\%} & 0.00\% & 19.33\% & 9.24\% & 0.00\% \\ \hline
        geant (22, 36, 462) & 3234 & 0.192 & 3.54E+11 & 18.68\% & 0.52\% & 0.00\% & \textbf{42.39\% }& 5.21\% & 0.04\% & 41.84\% & \textbf{18.75\%} & 0.04\% \\ \hline
        zib54 (54, 81, 1501) & 3002 & 0.241 & 1.18E+06 & \textbf{50.00\%} & 43.57\% & 0.00\% & \textbf{50.00\%} & 43.15\% & 0.00\% & \textbf{50.00\%} & \textbf{43.57\%} & 0.00\% \\ \hline
        nobel-eu (28, 41, 378) & 2646 & 0.272 & 1.57E+09 & 2.00\% & \textbf{3.31\%} & 0.00\% & \textbf{10.54\%} & -4.78\% & 0.00\% & 9.90\% & -0.37\% & 0.00\% \\ \hline
        ta2 (65, 108, 1869) & 11214 & 1.098 & 3.31E+12 & 37.90\% & 27.14\% & 0.03\% & \textbf{55.83\% }& 38.07\% & 1.74\% & \textbf{55.83\%} & \textbf{39.16\%} & 1.74\% \\ \hline
        cost266 (37, 57, 1332) & 9324 & 1.613 & 1.49E+12 & 3.43\% & -0.68\% & 0.00\% & \textbf{13.04\%} & 0.06\% & 0.00\% & 12.82\% & \textbf{1.12\%} & 0.00\% \\ \hline
        N600 (600, 2400, 1000) & 7000 & 5.661 & 2.08E+07 & 42.51\% & 42.78\% & 0.00\% & \textbf{66.60\%} & 61.44\% & 0.01\% & 65.71\% &\textbf{ 63.12\%} & 0.01\% \\ \hline
        N800 (800, 3200, 2000) & 14000 & 15.036 & 5.91E+10 & 35.18\% & 34.03\% & 0.00\% & \textbf{60.04\%} & 55.17\% & 0.00\% & 58.69\% & \textbf{56.00\%} & 0.00\% \\ \hline
        N1000 (1000, 4000, 4000) & 36000 & 48.988 & 1.06E+12 & 28.88\% & 27.30\% & 0.00\% & \textbf{58.21\%} & 53.32\% & 2.43\% & 56.53\% & \textbf{54.45\%} & 2.43\% \\ \hline
        N1400 (1400, 5600, 10000) & 60000 & 111.846 & 3.66E+11 & 53.27\% & 51.90\% & 0.00\% & \textbf{69.20\% }& 66.37\% & 0.00\% & 68.44\% & \textbf{67.09\%} & 0.00\% \\ \hline
        N1200 (1200, 4800, 8000) & 72000 & 121.757 & 7.05E+12 & 24.89\% & 22.48\% & 0.00\% & \textbf{49.86\% }& 42.85\% & 1.51\% & 47.82\% & \textbf{43.98\%} & 1.51\% \\ \hline
        N1600 (1600, 6400, 12000) & 72000 & 155.376 & 1.26E+12 & 53.16\% & 51.95\% & 0.00\% & \textbf{68.88\%} & 66.49\% & 0.00\% & 68.06\% &\textbf{ 67.00\%} & 0.00\% \\ \hline
        N1800 (1800, 7200, 15999) & 127992 & 327.129 & 3.08E+12 & 35.44\% & 31.42\% & 0.00\% & \textbf{64.15\%} & \textbf{60.92\%} & 1.44\% & 62.98\% & 60.06\% & 1.44\% \\ \hline
        N2000 (2000, 8000, 16000) & 144000 & 407.631 & 6.69E+12 & 40.13\% & 36.90\% & 0.00\% & \textbf{70.69\%} & \textbf{69.16\%} & 0.00\% & 69.70\% & 68.31\% & 0.00\% \\ \hline
\end{tabular}%
}
\caption{Experimental results on MC with different heuristic filtering strategies}
    \label{experiments:table:MCF:heuristic}
\end{table*}

\begin{table*}[!ht]
\centering
\begin{tabular}{|l|r|r|r|r|r|r|r|r|r|} \hline

\multirow{2}{*}{Instance ($|K|$, $m$)} &  \multicolumn{3}{|c|}{{\small \textsc{baseline}}} &  \multicolumn{2}{|c|}{{\small \textsc{exact-add}}} &  \multicolumn{2}{|c|}{{\small \textsc{exact-all}}} &  \multicolumn{2}{|c|}{{\small \textsc{exact-computed}}}  \\ \cline{2-10}

& \texttt{\#Calls} & \texttt{\#Added} & \texttt{time (s)} & \texttt{\%rCalls}  & \texttt{\%rTime} & \texttt{\%rCalls} & \texttt{\%rTime} & \texttt{\%rCalls} & \texttt{\%rTime}\\
\hline
E1 (100, 10) & 800 & 456 & 0.49 & 0.00\% & -6.12\% & \textbf{12.13\%} & \textbf{40.82\%} & \textbf{12.13\%} & 16.33\% \\ \hline
E2 (100, 50) & 1600 & 945 & 1.10 & 0.00\% & 10.00\% & \textbf{6.25\%} &\textbf{ 27.27\%} & 6.19\% & 15.45\% \\ \hline
E3 (100, 100) & 2500 & 1897 & 1.61 & 0.00\% & 11.80\% & \textbf{3.96\%} & 16.15\% & \textbf{3.96\%} & \textbf{19.25\%} \\ \hline
E4 (1000, 10) & 5000 & 2708 & 2.58 & 0.00\% & 20.54\% & \textbf{19.10\%} & \textbf{41.86\%} & \textbf{19.10\%} & 38.37\% \\ \hline
E5 (1000, 50) & 8000 & 4882 & 3.08 & 0.00\% & -2.60\% & 0.00\% & -3.57\% & 0.00\% & \textbf{-0.65\%} \\ \hline
E6 (1000, 100) & 8000 & 6053 & 3.56 & 0.00\% &\textbf{ -0.84\% }& 0.00\% & -3.09\% & 0.00\% & -1.69\% \\ \hline
E7 (5000, 10) & 20000 & 11744 & 6.68 & 0.00\% & 0.30\% & \textbf{16.28\%} & 13.77\% & \textbf{16.28\%} & \textbf{14.67\%} \\ \hline
E8 (5000, 50) & 30000 & 17830 & 11.34 & 0.00\% &\textbf{ -0.53\% }& 0.00\% & -2.12\% & 0.00\% & -0.97\% \\ \hline
E9 (5000, 100) & 35000 & 20274 & 15.21 & 0.00\% & -1.12\% & 0.00\% & -2.24\% & 0.00\% &\textbf{ -0.33\%} \\ \hline
\end{tabular}
\caption{Experimental results on GA with different exact filtering strategies}
\label{experiments:table:GA}
\end{table*}

\begin{table*}[!ht]
\centering
\scalebox{0.8}{%
\begin{tabular}{|l|r|r|r|r|r|r|r|r|r|r|r|r|} \hline

\multirow{2}{*}{Instance ($|K|$, $m$)} &  \multicolumn{3}{|c|}{{\small \textsc{baseline}}} &  \multicolumn{3}{|c|}{{\small \textsc{heur-add}}} &  \multicolumn{3}{|c|}{{\small \textsc{heur-all}}} &  \multicolumn{3}{|c|}{{\small \textsc{heur-computed}}}  \\ \cline{2-13}

& \texttt{\#Calls} & \texttt{time (s)} & \texttt{cost} & \texttt{\%rCalls}  & \texttt{\%rTime} & \texttt{GAP} &\texttt{\%rCalls} & \texttt{\%rTime} & \texttt{GAP} & \texttt{\%rCalls} & \texttt{\%rTime} & \texttt{GAP}\\
\hline
E1 (100, 10) & 800 & 0.49 & 17.59 & 0.13\% & -6.12\% & 0.00\% & \textbf{31.38\%} & \textbf{30.61\%} & 3.81\% & 17.25\% & 8.16\% & 0.00\% \\ \hline
E2 (100, 50) & 1600 & 1.10 & 78.57 & 0.00\% & 11.82\% & 0.00\% & \textbf{35.44\%} & \textbf{43.64\%} & 6.13\% & 16.44\% & 24.55\% & 0.00\% \\ \hline
E3 (100, 100) & 2500 & 1.61 & 198.4 & 0.00\% & 10.56\% & 0.00\% & \textbf{29.72\% }& \textbf{33.54\%} & 23.89\% & 5.96\% & 23.60\% & 0.00\% \\ \hline
E4 (1000, 10) & 5000 & 2.58 & 10 & 0.08\% & 21.71\% & 0.00\% & \textbf{26.18\%} & \textbf{47.29\%} & 0.00\% & 22.20\% & 41.09\% & 0.00\% \\ \hline
E5 (1000, 50) & 8000 & 3.08 & 50 & 0.05\% & -0.32\% & 0.00\% & \textbf{27.13\%} & \textbf{20.78\% }& 0.00\% & 15.64\% & 13.64\% & 0.00\% \\ \hline
E6 (1000, 100) & 8000 & 3.56 & 100 & -12.38\% & -13.76\% & 0.00\% & \textbf{12.06\%} & \textbf{7.02\%} & 0.00\% & 0.10\% & -0.84\% & 0.00\% \\ \hline
E7 (5000, 10) & 20000 & 6.68 & 10 & 0.00\% & -1.65\% & 0.00\% & \textbf{16.28\%} & 13.47\% & 0.00\% & \textbf{16.28\%} & \textbf{14.22\%} & 0.00\% \\ \hline
E8 (5000, 50) & 30000 & 11.34 & 50 & 0.03\% & -0.79\% & 0.00\% & \textbf{24.10\%} & \textbf{20.46\%} & 0.00\% & 14.14\% & 12.17\% & 0.00\% \\ \hline
E9 (5000, 100) & 35000 & 15.21 & 100 & 0.03\% & -0.92\% & 0.00\% & \textbf{28.40\% }& \textbf{24.06\% }& 0.00\% & 20.09\% & 17.29\% & 0.00\% \\ \hline
\end{tabular}%
}

\caption{Experimental results on GA with different heuristic filtering strategies \label{experiments:table:GA:heuristic}}
\end{table*}

\section{Concluding remarks}

From our knowledge, this is the first work that proposes a filtering method based on the information gathered in previous iterations.
Since the impact of the filtering depends on the problem and the decomposition, more computational experiments and research should be done in order to determine for which kind of problems such a filtering method is interesting.
Moreover, heuristic and exact methods could be combined, that is, a heuristic filtering may be used in early iterations of DWD, and may be switched to exact filtering when no negative reduced cost column is found. In our experiments, we did not manage to speed-up DWD by doing this since in MC, the number of iterations of DWD is very small, and in GA, the gain with the heuristic filtering is not enough to overcome the extra iteration of DWD with exact filtering.
Finally, it should be interesting to develop these filtering methods inside a generic column generation framework to promote its adoption.

\bibliographystyle{elsarticle-num}
\bibliography{bibliography.bib}

\end{document}